\newtheorem{definition}{Definition} 
\newtheorem{corollary}{Corollary} 
\newtheorem{claim}{Claim} 
\newtheorem{lemma}{Lemma} 
\newtheorem{theorem}{Theorem}
\newtheorem{proposition}{Proposition}
\newtheorem{remark}{Remark}
\newcommand{\F}{\mathbb{F}}
\newcommand{\M}{\mathcal{M}}
\renewcommand{\angle}[1]{\mathopen{\langle} #1\mathclose{\rangle}}
\DeclareMathOperator{\poly}{\mbox{\small\rm poly}}
\title{Randomized Polynomial Time Identity Testing for Noncommutative Circuits}
\author{V. Arvind\thanks{Institute of Mathematical Sciences, Chennai,
    India, \texttt{email: arvind@imsc.res.in}} \and Partha
  Mukhopadhyay\thanks{Chennai Mathematical Institute, Chennai, India,
    \texttt{email: partham@cmi.ac.in}} \and S. Raja\thanks{Institute of Mathematical Sciences, Chennai, India,
    \texttt{email: rajas@imsc.res.in}}}
\begin{document} 

\maketitle

\begin{abstract} 
In this paper we show that the black-box polynomial identity testing for
noncommutative polynomials $f\in\mathbb{F}\langle z_1,z_2,\cdots,z_n \rangle$ of degree $D$ and sparsity $t$, 
can be done in randomized $\poly(n,\log t,\log D)$ time. 
As a consequence, if the black-box contains a circuit $C$ of size $s$ computing $f\in\mathbb{F}\langle z_1,z_2,\cdots,z_n \rangle$ 
which has at most $t$ non-zero monomials, then the identity testing can be done by a randomized algorithm
with running time polynomial in $s$ and $n$ and $\log t$. This makes significant progress on a question
that has been open for over ten years.

The earlier result by Bogdanov and Wee \cite{BW05}, using the
classical Amitsur-Levitski theorem, gives a randomized polynomial-time
algorithm only for circuits of polynomially bounded syntactic degree.
In our result, we place no restriction on the degree of the circuit.

Our algorithm is based on automata-theoretic ideas introduced in
\cite{AMS08,AM08}. In those papers, the main idea was to construct
deterministic finite automata that isolate a single monomial from the
set of nonzero monomials of a polynomial $f$ in
$\mathbb{F}\langle z_1,z_2,\cdots,z_n \rangle$. In the present paper, since we need to
deal with exponential degree monomials, we carry out a different kind
of monomial isolation using nondeterministic automata.
\end{abstract}

\section{Introduction}

Noncommutative computation, introduced in complexity theory by Hyafil
\cite{Hya77} and Nisan \cite{N91}, is a central field of algebraic
complexity theory. The main algebraic structure of interest is the
free noncommutative ring $\F\angle{Z}$ over a field $\F$, where
$Z=\{z_1,z_2,\cdots,z_n\}$, $z_i, 1\le i\le n$ are free noncommuting
variables.

One of the main problems in the subject is noncommutative Polynomial
Identity Testing. The problem can be stated as follows: 

Let $f\in \F \angle{Z}$ be a polynomial represented by a noncommutative
arithmetic circuit $C$.  The polynomial $f$ can be either given by a
black-box for $C$ (using which we can evaluate $C$ on matrices with
entries from $\F$ or an extension field), or the circuit $C$ may be
explicitly given. The algorithmic problem is to check if the
polynomial computed by $C$ is identically zero. 

We recall the formal definition of a noncommutative arithmetic
circuit.

\begin{definition}
A \emph{noncommutative arithmetic circuit} $C$ over a field $\F$ and
indeterminates $z_1,z_2,\cdots,z_n$ is a directed acyclic graph (DAG)
with each node of indegree zero labeled by a variable or a scalar
constant from $\F$: the indegree $0$ nodes are the input nodes of the
circuit. Each internal node of the DAG is of indegree two and is
labeled by either a $+$ or a $\times$ (indicating that it is a plus
gate or multiply gate, respectively). Furthermore, the two inputs to
each $\times$ gate are designated as left and right inputs which is
the order in which the gate multiplication is done. A gate of $C$ is
designated as \emph{output}.  Each internal gate computes a polynomial
(by adding or multiplying its input polynomials), where the polynomial
computed at an input node is just its label. The \emph{polynomial
  computed} by the circuit is the polynomial computed at its output
gate. An arithmetic circuit is a formula if the fan-out of every gate
is at most one.
\end{definition}

Notice that if the size of circuit $C$ is $s$ the degree of the
polynomial computed by $C$ can be $2^s$. In the earlier
result \footnote{ We also note here that Raz and Shpilka
  \cite{raz05PIT} gives a white-box deterministic polynomial-time
  identity test for noncommutative algebraic branching programs
  (ABPs). The result of Forbes-Shpilka \cite{FS13} and Agrawal et al.,
  \cite{AGKS15} gives a quasi-polynomial time black-box algorithm for
  small degree ABPs.} by Bogdanov and Wee \cite{BW05}, a randomized
polynomial-time algorithm was shown for the case when the degree of
the circuit $C$ is polynomially bounded in $s$ and $n$
\cite{BW05}. The idea of the algorithm is based on a classical result
of Amitsur-Levitski \cite{AL}.  We recall below that part of the
Amitsur-Levitski theorem that is directly relevant to polynomial
identity testing.

\begin{theorem}[Amitsur-Levitski Theorem]\label{thm_al}
For any field $\F$ (of size more than $2d-1$), a nonzero
noncommutative polynomial $P \in \F\angle{Z}$ of degree $2d-1$ cannot
be a polynomial identity for the matrix algebra $\mathbb{M}_d(\F)$.
I.e.\ $f$ does not vanish on all $d\times d$ matrices over $\F$.
\end{theorem}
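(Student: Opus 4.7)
The plan is to prove the contrapositive: for every nonzero $P \in \F\angle{Z}$ with $\deg P \leq 2d-1$, I will exhibit matrices $A_1, \ldots, A_n \in \mathbb{M}_d(\F)$ such that $P(A_1, \ldots, A_n) \neq 0$.

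The first step is to reduce to the \emph{multilinear} case. Substituting $z_i \mapsto \sum_{j=1}^{k} \lambda_{i,j} w_j$ with $k = \deg P$, fresh noncommuting variables $w_1, \ldots, w_k$, and commuting scalar indeterminates $\lambda_{i,j}$, the polynomial $P$ becomes a polynomial in the $\lambda$'s whose coefficients are noncommutative polynomials in the $w$'s. Extracting the coefficient of a suitable multilinear monomial in the $\lambda$'s (the one corresponding to a term of top degree in some fixed monomial of $P$) yields a nonzero multilinear polynomial $\tilde P$ of degree $k$ in $w_1, \ldots, w_k$. Since $|\F| > 2d - 1 \geq k$, a generic $\lambda$-assignment over $\F$ realizes this coefficient extraction, and any nonzero matrix evaluation of $\tilde P$ lifts to a nonzero matrix evaluation of $P$. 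So it suffices to handle the multilinear case.

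Second, since $\tilde P$ is nonzero and multilinear in $w_1, \ldots, w_k$, some monomial $w_{\sigma(1)} \cdots w_{\sigma(k)}$ has a nonzero coefficient $c \in \F$; by relabeling the $w_i$'s via $\sigma^{-1}$, I may assume this is the monomial $w_1 w_2 \cdots w_k$. I then substitute matrix units in $\mathbb{M}_d(\F)$: set $w_{2j-1} \mapsto E_{j, j}$ and $w_{2j} \mapsto E_{j, j+1}$, where $E_{p,q}$ is the standard matrix unit. Because $k \leq 2d-1$, all indices $j, j+1$ stay within $\{1, \ldots, d\}$, so the substitution is valid, and the target monomial evaluates to a single matrix unit of the form $E_{1, \ast}$.

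The crux is to verify that under this substitution every non-identity ordering of the $w_i$'s yields the zero product, so that $\tilde P$ evaluates to $c \cdot E_{1, \ast} \neq 0$. A product $E_{a_1, b_1} \cdots E_{a_k, b_k}$ is nonzero if and only if $b_j = a_{j+1}$ for every $j$, so a nonzero ordering corresponds precisely to an Eulerian trail in the directed multigraph $G$ with a self-loop at each vertex $j$ and forward edges $j \to j+1$. Starting at vertex $1$, the self-loop at $1$ must be traversed before the forward edge $1 \to 2$, since no edge leads back into $1$; inductively the same constraint applies at every subsequent vertex, so the Eulerian trail is unique and corresponds to the identity permutation. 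The main technical obstacle I anticipate is the clean execution of the multilinearization step — verifying that the coefficient-extraction produces a noncommutative polynomial of the right shape and that $|\F| > 2d-1$ is exactly what the argument consumes — after which the matrix-unit isolation is a transparent combinatorial argument on a very simple graph.
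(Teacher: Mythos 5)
The paper does not prove this theorem; it is cited as a classical result of Amitsur and Levitzki (reference~[AL]), so there is no in-paper proof to compare against. Your argument is the standard proof of the minimality half of Amitsur--Levitzki (the only half quoted here): multilinearize using the field-size hypothesis, then evaluate on the ``staircase'' of matrix units $E_{1,1},E_{1,2},E_{2,2},E_{2,3},\ldots$ and invoke uniqueness of the Eulerian trail. The proof is correct. Two places to sharpen when writing it out fully: (i) the claim that $\tilde{P}\neq 0$ deserves a sentence --- only degree-$k$ monomials of $P$ contribute to the chosen $\lambda$-monomial, each such contribution is some $w_{\sigma(1)}\cdots w_{\sigma(k)}$ for a permutation $\sigma$, and $\sigma=\mathrm{id}$ contributes $w_1\cdots w_k$ with exactly the original nonzero coefficient of your fixed top-degree monomial, which cannot be cancelled since distinct $\sigma$'s give distinct noncommutative words; and (ii) the uniqueness of the Eulerian trail rests on vertex $1$ being the unique vertex with outdegree exceeding indegree (it has no incoming edge other than its own self-loop), forcing the trail to start there, after which the loop-then-advance pattern is forced at every vertex --- you gesture at this and it is indeed all that is needed, including in the even-$k$ case where the last vertex $m+1$ has no self-loop. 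The hypothesis $|\F|>2d-1$ is consumed exactly where you place it: the $\lambda$-polynomial $P\bigl(\sum_j\lambda_{1,j}B_j,\ldots,\sum_j\lambda_{n,j}B_j\bigr)$ has total degree at most $2d-1<|\F|$, and its coefficient at the chosen $\lambda$-monomial is $\tilde{P}(B_1,\ldots,B_k)\neq 0$, so Schwartz--Zippel yields a scalar assignment making $P$ nonvanishing.
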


Bogdanov and Wee's randomized PIT algorithm \cite{BW05} applies the
above theorem to obtain a randomized PIT as follows: Let
$C(z_1,z_2,\cdots,z_n)$ be a circuit of syntactic degree bounded by
$2d-1$. For each $i\in [n]$, substitute the variable $z_i$ by a
$d\times d$ matrix $M_i$ of commuting indeterminates.  More precisely,
the $(\ell,k)^{th}$ entry of $M_i$ is $z^{(i)}_{\ell,k}$ where $1\leq
\ell,k\leq d$. By Theorem \ref{thm_al}, the matrix $M_f = f(M_1, M_2,
\ldots, M_n)$ is not identically zero. Hence, in $M_f$ there is an
entry $(\ell',k')$ which has the \emph{commutative} nonzero polynomial
$g_{\ell',k'}$ over the variables $\{z^{(i)}_{\ell,k} : 1\leq i\leq n,
1\leq \ell,k\leq d\}$. Notice that the degree of the polynomial
$g_{\ell',k'}$ is at most $2d-1$. If we choose an extension field of
$\F$ of size at least $4d$, then we get a randomized polynomial
identity testing algorithm by the standard
Schwartz-Zippel-Lipton-DeMello Lemma \cite{Sch80,Zippel79,DL78}.

The problem with this approach for general noncommutative circuits
(whose degree can be $2^s$) is that the dimension of the matrices
grows linearly with the degree of the polynomial. Therefore, this
approach only yields a randomized exponential time algorithm for the
problem. It cannot yield an efficient algorithm for polynomial
identity testing for a general noncommutative circuit where the
syntactic degree could be exponential in the size of the
circuit. Finding an efficient randomized identity test for general
noncommutative circuits was a well-known open problem. In this work we
resolve it for the case when the polynomial is promised to have the number of non-zero monomials  
at most exponential in the input size.
%by giving such an algorithm.

\section{Main Result} 

The crux of our result is the following theorem that we show about
noncommutative identities which is of independent mathematical
interest. 

\begin{theorem}\label{thm-main}
Let $\F$ be a field of size more than $d$. Let $f\in \F\langle z_1,
z_2, \ldots, z_n\rangle$ be a nonzero polynomial of degree $d$ and
with $t$ nonzero monomials. Then $f$ cannot be a polynomial identity
for the matrix ring $\mathbb{M}_k(\F)$ for $k=\log t + 1$.
\end{theorem}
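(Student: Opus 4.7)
The plan is to exhibit, for the given $f$, a tuple of $k \times k$ matrices $(M_1, \ldots, M_n)$ with $k = \log t + 1$ at which $f$ does not vanish. I interpret such a tuple as the transition matrices of a weighted nondeterministic finite automaton with $k$ states on the alphabet $\{z_1, \ldots, z_n\}$: the $(s, s')$ entry of $M_i$ is the weight on the $z_i$-labeled edge from state $s$ to state $s'$. With this interpretation, the $(s_0, s_f)$ entry of $f(M_1, \ldots, M_n)$ equals
\[
\sum_{m \in \mathrm{supp}(f)} c_m \cdot w_{s_0, s_f}(m),
\]
where $w_{s_0, s_f}(m)$ is the total weight of $m$-labeled paths from $s_0$ to $s_f$. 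It therefore suffices to design a weighted NFA on $k$ states together with a pair $(s_0, s_f)$ for which this sum, viewed as a polynomial in the edge weights, is not identically zero; the hypothesis $|\F| > d$ then lets Schwartz--Zippel--DeMillo--Lipton specialize the edge weights to elements of $\F$.

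I would build such an NFA by induction on $t$, adapting the monomial-isolation strategy of \cite{AMS08, AM08} from the deterministic to the nondeterministic setting. The base $t = 1$ is immediate: one scalar state suffices. For $t \geq 2$, since the monomials in $\mathrm{supp}(f)$ are pairwise distinct, first I would locate a \emph{splitting query}---say, a position $p$ together with a subset $T$ of letters, or a more general nondeterministic predicate---that partitions $\mathrm{supp}(f)$ into two parts $S_0, S_1$ of size at most $\lceil t/2 \rceil$. Writing $f = f_0 + f_1$ with $f_b = \sum_{m \in S_b} c_m m$, the induction hypothesis supplies weighted NFAs $A_0, A_1$ of at most $\log t$ states isolating $f_0$ and $f_1$ respectively. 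The final step would be to glue $A_0$ and $A_1$ into a single NFA of $\log t + 1$ states by introducing one ``selector'' state that implements the splitting query nondeterministically, in such a way that a monomial $m \in S_b$ can contribute only through the $A_b$-branch.

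The hard part will be the combination step: the bound $k = \log t + 1$ permits only a single new state beyond the inductive size, forcing $A_0$ and $A_1$ to share their remaining $\log t$ states. Two difficulties arise. First, not every set of $t$ distinct monomials admits a cheap splitting query: a single position may produce a highly unbalanced split, and more subtle predicates (boolean combinations of position-queries or hash-style predicates) may be needed, each requiring care to implement with just one extra state. Second, aggressive state sharing risks creating spurious paths whose weighted contributions cause cancellations across $S_0$ and $S_1$. Handling both issues appears to require maintaining a carefully crafted recursive invariant on the isolating NFAs---such as a common ``skeleton'' of $\log t$ states with a single distinguished state controlling the isolation---so that gluing consumes exactly one new state and introduces no unwanted paths. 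Establishing and preserving this invariant, and verifying that the resulting path-polynomial is provably nonzero, is the technical heart of the argument; crucially, this is where nondeterminism buys us a logarithmic state count, since a DFA would need many states merely to implement a position-query when $d$ is exponentially large.
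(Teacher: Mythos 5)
Your high-level framing---interpret $k\times k$ matrices as a weighted NFA, reduce to showing a path-polynomial in the edge weights is nonzero, then finish with Schwartz--Zippel since $|\F|>d$---matches the paper, and so does the use of a halving argument to get a logarithmic bound. But your actual construction diverges sharply from the paper's and, more importantly, it is not carried through: you explicitly flag the recursive gluing step as the technical heart and leave it unresolved. The central obstacle you name (sharing $\log t$ states between $A_0$ and $A_1$ while adding one selector state, without spurious paths causing cancellations) is precisely the thing that must be solved, and the proposal offers no mechanism for doing so. As stated this is a genuine gap, not a detail to be filled in.

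The paper sidesteps both of your difficulties by \emph{not} trying to make the NFA partition or classify monomials at all. After converting to a bivariate polynomial in $\F\langle x_0,x_1\rangle$ (via $z_i\mapsto x_0x_1^ix_0$), it invokes a halving lemma to find an \emph{isolating index set} $I\subseteq[D]$ with $|I|=k\le\log t$: a set of positions on which one monomial $w_1$ of maximal degree disagrees with every other maximal-degree monomial. The automaton is then a fixed ``ladder'' on states $q_0,\dots,q_k$: at each step it either self-loops or advances, i.e.\ it nondeterministically guesses a $k$-subset $J$ of positions, and every monomial traverses every such path. Crucially, this is a \emph{substitution} NFA: self-loops in block $j$ emit a fresh commuting variable $\xi_j$, and the advance out of state $q_{j-1}$ on reading $x_b$ emits $y_{b,j}$. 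Under this substitution the contribution of $w_i$ along guess $J$ is $\xi_J\,y_{i,J}$, where $\xi_J$ depends only on $J$ and $y_{i,J}$ records the projection of $w_i$ onto $J$. Distinct guesses $J\ne J'$ produce distinct $\xi$-prefixes, so there is no cancellation across nondeterministic paths; and on the correct path $J=I$, the isolation property forces $y_{1,I}\ne y_{i,I}$ for all $i\ne 1$, so the coefficient of $\xi_I y_{1,I}$ in the $(q_0,q_f)$ entry is exactly $c_1\ne0$. This substitution-variable bookkeeping is the key idea your proposal is missing: it replaces the problem of preventing spurious paths with the far easier task of \emph{tagging} paths so they cannot interfere. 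Your worry about unbalanced splits is also moot in the paper's scheme, since the halving lemma only needs to follow the smaller half to isolate a single monomial; it never needs to balance or to recurse on both halves.

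If you want to salvage your recursive-gluing idea, you would at minimum need to specify an invariant strong enough to guarantee that (i) the two sub-automata can share all but one state, and (ii) the path-polynomial contributions of $S_0$ and $S_1$ land in disjoint sets of monomials in the edge-weight variables. The paper's $\xi_J$-tagging is one way to enforce (ii) globally without any recursion; I would encourage you to look for an analogous tagging trick before investing further in the inductive combination step.
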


The randomized polynomial identity test for noncommutative arithmetic
circuits is an immediate corollary. To see this, suppose $C$ is a
noncommutative arithmetic circuit of size $s$ computing a polynomial with at most 
$t$ monomials. The degree of the
polynomial $f$ computed by the circuit is bounded by $2^s$.
% and the
%number of non-zero monomials in the polynomial computed by $C$ is also
%bounded by $2^s$. This is because the number of monomials of the
%polynomial computed by the circuit $C$ is bounded by the number of
%multiplicative sub-circuits of the given circuit $C$ and the number of
%multiplicative sub-circuits of $C$ is bounded by $2^s$. In other
%words, the sparsity of the polynomial computed by the circuit $C$ is
%bounded by $2^s$. 
Thus, if $f$ is not identically zero, by
Theorem~\ref{thm-main}, the polynomial $f$ does not vanish if we
substitute for each $z_i$, $(\log t+1)\times (\log t+1)$ matrices of
indeterminates (all distinct). Indeed, $f$ will evaluate to an
$(\log t+1)\times (\log t+1)$ matrix whose entries are polynomials in commuting
variables of degree at most $2^s$ \footnote{Note that in general a noncommutative circuit of size $s$ can compute a 
polynomial that can have $2^{2^s}$ monomials. For example the polynomial $f(x,y)=(x+y)^{2^s}$ has noncommutative circuit of size 
$O(s)$ but the number of monomials is $2^{2^s}$. Our algorithm can not handle such cases.}. For each entry of this matrix, we
can employ the standard Schwartz-Zippel-Lipton-DeMello
\cite{Sch80,Zippel79,DL78} lemma based algorithm for commutative
polynomials (by evaluating them over $\F$ or a suitable extension
field). This proves the main result of the paper.

\begin{corollary}
Let $C$ be a noncommutative circuit of size $s$ given as a black-box computing a polynomial 
$f\in \F\angle{z_1,z_2,\ldots,z_n}$ with sparsity $t$. Then there is a randomized algorithm to check whether 
$f$ is an identically zero polynomial and the algorithm runs in time $\poly(s,n,\log t)$.  
\end{corollary}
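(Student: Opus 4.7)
The plan is to lift Theorem~\ref{thm-main} into a black-box randomized identity test by the classical ``matrix-of-indeterminates'' substitution followed by an entrywise application of the Schwartz-Zippel-Lipton-DeMello lemma. Since the theorem already certifies the existence of a nonzero evaluation on $k \times k$ matrices with $k = \lceil \log t \rceil + 1$, no further combinatorial work is needed; only that a \emph{random} such evaluation can be performed efficiently through the black-box.

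In detail, I would set $k = \lceil \log t \rceil + 1$ and for each noncommuting variable $z_i$ introduce a symbolic $k \times k$ matrix $Y^{(i)}$ whose $k^2$ entries are fresh commuting indeterminates, so that the substitution uses $n k^2 = O(n \log^2 t)$ new variables in all. By Theorem~\ref{thm-main}, at least one entry of the $k \times k$ matrix $f(Y^{(1)}, \ldots, Y^{(n)})$ is a nonzero commutative polynomial in those indeterminates, and since $C$ has size $s$ its syntactic degree is at most $2^s$, which also bounds the total degree of each entry. Next I would fix an extension field $\mathbb{E}$ of $\F$ of size at least $2^{s+2}$ (elements representable in $O(s)$ bits, arithmetic in $\poly(s)$ time), sample each indeterminate independently and uniformly from $\mathbb{E}$, evaluate the black-box $C$ on the resulting $k \times k$ matrices, and accept iff the returned matrix is the zero matrix. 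Each gate of $C$ then performs a $k \times k$ matrix addition or multiplication in $O(k^3)$ field operations, so the total running time is $\poly(s, n, \log t)$, and by Schwartz-Zippel any nonzero entry evaluates to a nonzero element of $\mathbb{E}$ with probability at least $1 - 2^s/|\mathbb{E}| \geq 3/4$, giving one-sided error at most $1/4$.

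I do not expect any real obstacle in the proof of the Corollary; every step beyond Theorem~\ref{thm-main} is standard. The only subtleties worth flagging are (i) the need to work over an extension field large enough to defeat the exponential degree $2^s$, which forces $O(s)$-bit representations of field elements and is the point at which this algorithm diverges from the polynomial-degree result of Bogdanov and Wee \cite{BW05}; and (ii) the routine bookkeeping that matrix arithmetic on $k \times k$ matrices stays within the promised $\poly(s, n, \log t)$ budget, which is immediate from $k = O(\log t)$ and the fact that $C$ contains only $s$ gates.
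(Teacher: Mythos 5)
Your proposal is correct and follows essentially the same route as the paper: substitute $k \times k$ matrices of fresh commuting indeterminates with $k = \log t + 1$, invoke Theorem~\ref{thm-main} to guarantee a nonzero matrix entry, and apply Schwartz-Zippel-Lipton-DeMello over an extension field large enough to beat the degree bound $2^s$. You merely spell out the extension-field size and the $O(k^3)$ per-gate cost that the paper leaves implicit.
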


\begin{remark}
It is interesting to compare Theorem~\ref{thm-main} with the classical
Amitsur-Levitski theorem. Our result brings out the importance of the
number of monomials in a polynomial identity for $d\times d$ matrices.
It implies that any polynomial identity $f$ for $d\times d$ matrices
over a field $\F$ of size more than $\deg f$ must have more than
$2^{d-1}$ monomials.
\end{remark}

We first describe the basic steps required for the proof of
Theorem~\ref{thm-main}. Since we are working in the free
noncommutative ring $\F\langle z_1, z_2, \ldots, z_n\rangle$, notice
that monomials are free words over the alphabet
$\{z_1,z_2,\ldots,z_n\}$, and the polynomial $f$ is an $\F$-linear
combination of monomials.

\subsubsection*{Converting to a bivariate polynomial}

It is convenient to convert the given noncommutative polynomial into a
noncommutative polynomial in $\F\langle x_0,x_1\rangle$, where $x_0$
and $x_1$ are two noncommuting variables. Let 
\[
f = \sum_{i=1}^t c_i w_i
\]
with $c_i \in \F$, where $w_i$ are the nonzero monomials (over
$\{z_1,z_2,\ldots,z_n\}$) of $f$. We use the bivariate substitution
$\forall i\in[n] : z_i\rightarrow x_0 x_1^i x_0$ to encode the words
over two variables $x_0, x_1$.  By abuse of notation, we write the
resulting polynomial as $f(x_0, x_1) \in\F\langle x_0, x_1\rangle$.
Since the above encoding of monomials is bijective, the following
claim clearly holds.

\begin{claim}
The bivariate noncommutative polynomial $f(x_0,x_1)$ is nonzero if and
only if the original polynomial $f\in\F\langle
z_1,z_2,\ldots,z_n\rangle$ is nonzero. 
\end{claim}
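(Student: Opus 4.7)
The plan is to reduce the biconditional to showing that the substitution $\varphi : z_i \mapsto x_0 x_1^i x_0$ is injective on the set of monomials (equivalently, free words) over $\{z_1,\ldots,z_n\}$. Once injectivity is in hand, the claim follows by linearity: writing $f = \sum_{i=1}^t c_i w_i$ with the $w_i$ distinct and all $c_i \neq 0$, the image is $f(x_0,x_1) = \sum_{i=1}^t c_i\, \varphi(w_i)$, which is a sum over distinct monomials of $\F\angle{x_0,x_1}$ with the same coefficients $c_i$, and hence nonzero if and only if $f$ is nonzero.

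To prove injectivity I would exhibit an explicit decoding procedure. A monomial $w = z_{i_1} z_{i_2} \cdots z_{i_k}$ is sent to the word
\[
\varphi(w) \;=\; x_0\, x_1^{i_1}\, x_0^2\, x_1^{i_2}\, x_0^2\, \cdots\, x_0^2\, x_1^{i_k}\, x_0,
\]
which begins and ends with a single $x_0$ and has internal maximal $x_0$-blocks of length exactly two, each sandwiched between consecutive $x_1$-blocks of positive exponents $i_1, \ldots, i_k$. Reading this word from left to right, the boundary $x_0$'s and the internal $x_0^2$ separators are forced by the shape, so the exponents $(i_1,\ldots,i_k)$ can be recovered unambiguously, hence so can $w$.

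There is essentially no obstacle here; the only subtle point to check is the rigidity of the block structure: since each $i_j \geq 1$, every $x_1$-block is nonempty, and adjacent factors $x_0 x_1^{i_j} x_0$ and $x_0 x_1^{i_{j+1}} x_0$ concatenate to produce an $x_0$-block of length exactly two internally, distinguishable from the length-one $x_0$'s at the two endpoints. This canonical block decomposition gives the decoding, completes the proof of injectivity of $\varphi$, and thereby the claim.
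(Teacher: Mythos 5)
Your proof is correct and takes essentially the same approach as the paper, which simply observes that the encoding $z_i \mapsto x_0 x_1^i x_0$ is bijective on monomials and lets the claim follow immediately; you additionally spell out the block-structure decoding argument that justifies the injectivity, a step the paper leaves implicit.
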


The degree $D$ of $f(x_0,x_1)$ is clearly bounded by $(n+2)d$.

\begin{definition}
Let $\M \subseteq \{x_0,x_1\}^D$ be a finite set of degree $D$
monomials over variables $\{x_0,x_1\}$. A subset of indices
$I\subseteq [D]$ is said to be an \emph{isolating index set} for $\M$
if there is a monomial $m\in\M$ such that for each $m'\ne m, m'\in\M$,
there is some index $i\in I$ for which $m[i]\ne m'[i]$. I.e.\ no other
monomial in $\M$ agrees with monomial $m$ on all positions in the
index set $I$.
\end{definition}

The following lemma says that every subset of monomials $\M \subseteq
\{x_0,x_1\}^D$ has an isolating index set of size $\log |\M|$. The
proof is a simple halving argument. 

\begin{lemma}\label{isolating-lemma}
Let $\M \subseteq \{x_0,x_1\}^D$ be a finite set of degree $D$
monomials over variables $\{x_0,x_1\}$. Then $\M$ has an isolating
index set of size $k$ which is bounded by $\log |\M|$.
\end{lemma}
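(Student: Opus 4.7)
The plan is a greedy halving argument that iteratively builds $I$ alongside a shrinking pool $S \subseteq \M$ of \emph{surviving candidate} monomials, while maintaining the invariant that every monomial in $S$ agrees with every other monomial in $S$ at all positions already placed in $I$. At termination, the unique monomial left in $S$ will serve as the isolated monomial $m$, and $I$ will be its isolating index set.

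Concretely, I would initialize $I = \emptyset$ and $S = \M$, then repeat the following while $|S| \geq 2$: pick any index $i \in [D]$ at which two monomials of $S$ disagree (such an $i$ must exist because the monomials of $S$ are pairwise distinct strings of the same length $D$), partition $S$ into the two nonempty parts $S_0 = \{m \in S : m[i] = x_0\}$ and $S_1 = \{m \in S : m[i] = x_1\}$, add $i$ to $I$, and replace $S$ by whichever of $S_0, S_1$ is smaller (breaking ties arbitrarily). Each iteration at least halves $|S|$, so after at most $\lceil \log |\M| \rceil$ iterations we reach $|S| = 1$.

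To finish, let $m$ be the unique surviving monomial, and consider any other $m' \in \M$. Let $j$ be the iteration at which $m'$ was first removed from $S$; at that iteration the index $i_j$ was placed into $I$ and $S$ was restricted to the side of the partition containing $m$ (since $m$ survives every iteration). Because $m'$ lay on the opposite side, $m'[i_j] \ne m[i_j]$, so $m$ and $m'$ disagree at some index in $I$. This shows $I$ isolates $m$ and $|I| \leq \log |\M|$. There is no real obstacle in this argument — it is purely combinatorial — and the only subtle point is that the ``keep the smaller half'' rule automatically preserves some survivor $m$, whose identity is only revealed at the end of the process rather than being fixed in advance.
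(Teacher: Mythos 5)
Your proof is correct and follows essentially the same halving strategy as the paper: repeatedly pick a position where the current pool of survivors disagrees, keep the smaller half, and use the unique survivor as the isolated monomial. The only cosmetic differences are that the paper picks the \emph{first} disagreeing index whereas you pick an arbitrary one (both valid), and that you spell out the final verification that $m$ and any $m'$ disagree on $I$, which the paper leaves as ``clearly''; also, since keeping the smaller half drops $|S|$ to at most $\lfloor |S|/2 \rfloor$, the exact bound $\lfloor \log|\M|\rfloor \le \log|\M|$ holds, so your $\lceil\log|\M|\rceil$ is slightly loose but harmless.
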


\begin{proof}
The monomials $m\in \M$ are seen as indexed from left to right, where
$m[i]$ denotes the variable in the $i^{th}$ position of $m$. Let
$i_1\le D$ be the first index such that not all monomials agree on the
$i^{th}$ position. Let
\begin{eqnarray*}
S_0 & = & \{ m :  m[i_1] = x_0\}\\ 
S_1 & = & \{ m :  m[i_1] = x_1\}.  
\end{eqnarray*}

Either $|S_0|$ or $|S_1|$ is of size at most $|\M|/2$.  Let $S_{b_1}$
denote that subset, $b_1\in\{0,1\}$.  We replace the monomial set $\M$
by $S_{b_1}$ and repeat the same argument for at most $\log |\M|$
steps. Clearly, by this process we identify a set of indices $I=\{i_1,
\ldots, i_k\}$, $k\le \log |\M|$ such that the set shrinks to a
singleton set $\{m\}$. Clearly, $I$ is an isolating index set as
witnessed by the \emph{isolated monomial} $m$.
\end{proof}

\begin{remark}
Notice that the size of the isolating index set denoted $k$ is bounded
by $\log t$ as well as the degree $D$ of the polynomial $f(x_0,x_1)$.
\end{remark}

\subsubsection*{NFA construction}

In our earlier paper \cite{AMS08} (for sparse polynomial identity
testing) we used a deterministic finite state automaton to isolate a
monomial by designing an automaton which accepts a unique
monomial. This will not work for the proof of Theorem~\ref{thm-main}
because the number of states that such a deterministic automaton
requires is the length of the monomial which could be exponentially
large. It turns out that we can use a small \emph{nondeterministic}
finite automaton which will guess the isolating index set for the set
of nonzero monomials of $f$. The complication is that there are
exponentially many wrong guesses. However, it turns out that if we
make our NFA a \emph{substitution automaton}, we can ensure that the
monomials computed on different nondeterministic paths (which
correspond to different guesses of the isolating index set) all have
disjoint support. Once we have this property, it is easy to argue that
for the correct nondeterministic path, the computed commutative
polynomial is nonvanishing (because the isolated monomial cannot be
cancelled). With this intuition, we proceed with the simple technical
details.

We describe the construction of a substitution NFA that substitutes,
on its transition edges, a new commuting variable for the variable
($x_0$ or $x_1$) that it reads. Formally, let $A$ denote the NFA given
by a $5$-tuple $A=\langle Q, \Sigma =
\{x_0,x_1\},\delta,q_o,q_f\rangle$, where $Q = \{q_0,q_1, q_2, \ldots,
q_{k}\}$ and $q_f = q_{k}$. We use the indices $i_1, \ldots,
i_{k}$ from Lemma \ref{isolating-lemma} to define the transition
of $A$.  The set of indices partition each monomial $m$ into $k +
1$ blocks as follows.

\[
m[1,i_1-1] m[i_1] m[i_1 + 1, i_2-1] m[i_2]\cdots\cdots m[i_{k -1}
  + 1, i_{k - 1}] m[i_{k}] m[i_{k + 1}, D],
\] 

where $m[i]$ denotes the variable in $i^{th}$ position of $m$ and
$m[i,j]$ denotes the submonomial of $m$ from positions $i$ to $j$.

We use a new set of variables for different blocks and the indices
$i_1,\ldots, i_{k}$ as follows.  The \emph{block variables} are
$\bigcup_{j\in [k + 1]} \{\xi_j\}$, and the
\emph{index variables} are $\bigcup_{j\in [k]}\{y_{0,j},
y_{1,j}\}$. 

Now we are ready to describe the transitions of the automaton. When
the NFA is reading the input variables in block $j$, it will replace
each $x_b, b\in\{0,1\}$ by block variable $\xi_j$. Then the NFA
nondeterministically decides if block $j$ is over and the current
location is an index in the isolating set. In that case, the NFA
replaces the variable $x_b$ that is read by the index variable
$y_{b,j}$ and the NFA also increments the block number to $j+1$. It
will now make its transitions in the $(j+1)^{st}$ block as described
above.

The NFA is formally described by the following simple transition
rules. For $0\leq i\leq k - 1$, and $b\in\{0,1\}$,

\[
\delta(q_i, x_b)\xrightarrow{\xi_{i+1}} q_i
\]
\[
\delta(q_i, x_b)\xrightarrow{y_{b,i+1}} q_{i+1}.
\]

We depict the description of the automaton in the following figure.

\begin{center}
\begin{tikzpicture}
\node(pseudo) at (-1,0){};
\node(0) at (0,0)[shape=circle,draw]        {$q_0$};
\node(1) at (2,0)[shape=circle,draw]        {$q_1$};
\node(2) at (4,0)[shape=circle,draw]        {$q_2$};
\node(3) at (6,0)[shape=circle,draw,double] {$q_f$};
\path [->]
  (0)      edge                 node [above]  {$y_{0,1}, y_{1,1}$}     (1)
  (1)      edge                 node [above]  {$y_{0,2}, y_{1,2}$}     (2)
  (2)      edge                 node [above]  {$\cdots$}     (3)
  %(2)      edge [bend left=30]  node [below]  {a}     (0)
  (0)      edge [loop above]    node [above]  {$\xi_1$}     ()
  (1)      edge [loop above]    node [above]  {$\xi_2$}     ()
(2)      edge [loop above]    node [above]  {$\xi_3$}     ()
  (3)      edge [loop above]    node [above]  {$\xi_{k +1}$}   ()
  (pseudo) edge                                       (0);
\end{tikzpicture}
\end{center}

Clearly, the transitions of the automaton $A$ can be described by two
$(k + 1)\times (k + 1)$ adjacency matrices $M_{x_0}$ and
$M_{x_1}$ corresponding to the moves of the automaton on input $x_0$
and input $x_1$. 

More precisely, for variable $x_0$, we take the adjacency matrix
$M_{x_0}$ of the following labeled directed graph extracted from the
above automaton.

\begin{center}
\begin{tikzpicture}
\node(pseudo) at (-1,0){};
\node(0) at (0,0)[shape=circle,draw]        {$q_0$};
\node(1) at (2,0)[shape=circle,draw]        {$q_1$};
\node(2) at (4,0)[shape=circle,draw]        {$q_2$};
\node(3) at (6,0)[shape=circle,draw,double] {$q_f$};
\path [->]
  (0)      edge                 node [above]  {$y_{0,1}$}     (1)
  (1)      edge                 node [above]  {$y_{0,2}$}     (2)
  (2)      edge                 node [above]  {$\cdots$}     (3)
  %(2)      edge [bend left=30]  node [below]  {a}     (0)
  (0)      edge [loop above]    node [above]  {$\xi_1$}     ()
  (1)      edge [loop above]    node [above]  {$\xi_2$}     ()
(2)      edge [loop above]    node [above]  {$\xi_3$}     ()
  (3)      edge [loop above]    node [above]  {$\xi_{k+1}$}   ()
  (pseudo) edge                                       (0);
\end{tikzpicture}
\end{center}
The corresponding matrix $M_{x_0}$ of dimension $(k+1)\times(k+1)$, we substitute for $x_0$ is the following.
\begin{displaymath}
\mathbf{M_{x_0}} =
\left( \begin{array}{cccccc}
\xi_1 & y_{0,1} & 0 &\ldots &0 &0\\
0 & \xi_2 & y_{0,2}& \ldots & 0 &0\\
0 & 0 & \xi_3& \ldots & 0 &0\\
\vdots & \vdots & \vdots& \ddots& \vdots& \vdots\\
0 & 0 & 0& \ldots & \xi_{k} &y_{0,k} \\
0 & 0 & 0& \ldots & 0 & \xi_{k+1}
\end{array} \right)
\end{displaymath}
Similarly, for variable $x_1$ we take the adjacency matrix $M_{x_1}$
of the following labeled directed graph.

\begin{center}
\begin{tikzpicture}
\node(pseudo) at (-1,0){};
\node(0) at (0,0)[shape=circle,draw]        {$q_0$};
\node(1) at (2,0)[shape=circle,draw]        {$q_1$};
\node(2) at (4,0)[shape=circle,draw]        {$q_2$};
\node(3) at (6,0)[shape=circle,draw,double] {$q_f$};
\path [->]
  (0)      edge                 node [above]  {$y_{1,1}$}     (1)
  (1)      edge                 node [above]  {$y_{1,2}$}     (2)
  (2)      edge                 node [above]  {$\cdots$}     (3)
  %(2)      edge [bend left=30]  node [below]  {a}     (0)
  (0)      edge [loop above]    node [above]  {$\xi_1$}     ()
  (1)      edge [loop above]    node [above]  {$\xi_2$}     ()
(2)      edge [loop above]    node [above]  {$\xi_3$}     ()
  (3)      edge [loop above]    node [above]  {$\xi_{k +1}$}   ()
  (pseudo) edge                                       (0);
\end{tikzpicture}
\end{center}
The corresponding matrix $M_{x_1}$ of dimension $(k+1)\times(k+1)$, we substitute for $x_1$ is the following.
\begin{displaymath}
\mathbf{M_{x_1}} =
\left( \begin{array}{cccccc}
\xi_1 & y_{1,1} & 0 &\ldots &0 &0\\
0 & \xi_2 & y_{1,2}& \ldots & 0 &0\\
0 & 0 & \xi_3& \ldots & 0 &0\\
\vdots & \vdots & \vdots& \ddots& \vdots& \vdots\\
0 & 0 & 0& \ldots & \xi_{k} &y_{1,k} \\
0 & 0 & 0& \ldots & 0 & \xi_{k+1}
\end{array} \right)
\end{displaymath}

The rows and the columns of the matrices $M_{x_0}$ and $M_{x_1}$ are
indexed by the states of the automaton and the entries are either
block variables or index variables as indicated in the transition
diagram. Let 
\[
f=\sum_{i=1}^t c_i w_i. 
\]

Define the matrix $w_i(M_{x_0},M_{x_1})$ obtained by substituting in
$w_i$ the matrix $M_{x_b}$ for $x_b, b\in\{0,1\}$ and multiplying
these matrices. The following proposition is immediate as $f$ is a
linear combination of the $w_i$'s.

\begin{proposition}\label{prop}
$M_f=f(M_{x_0}, M_{x_1})
 = \sum_{i=1}^t c_iw_i(M_{x_0}, M_{x_1})$. 
\end{proposition}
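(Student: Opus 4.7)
The plan is to prove this by pure linearity of polynomial substitution, since the statement merely unpacks the definition of evaluating a noncommutative polynomial at matrices. Concretely, I would first fix the commutative polynomial ring $R = \F[\xi_1,\ldots,\xi_{k+1},y_{0,1},y_{1,1},\ldots,y_{0,k},y_{1,k}]$ in which every entry of $M_{x_0}$ and $M_{x_1}$ lives, so that $M_{x_0},M_{x_1}\in\mathbb{M}_{k+1}(R)$. Since $\mathbb{M}_{k+1}(R)$ is an associative $\F$-algebra, the universal property of the free noncommutative algebra $\F\langle x_0,x_1\rangle$ yields a unique $\F$-algebra homomorphism $\Phi:\F\langle x_0,x_1\rangle\to\mathbb{M}_{k+1}(R)$ sending $x_b\mapsto M_{x_b}$ for $b\in\{0,1\}$.

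Next, for each nonzero monomial $w_i$ of $f$, the image $\Phi(w_i)=w_i(M_{x_0},M_{x_1})$ is by definition the matrix product obtained by reading $w_i$ from left to right and replacing each occurrence of $x_b$ by $M_{x_b}$. Because $\Phi$ is $\F$-linear and $f=\sum_{i=1}^t c_i w_i$ with $c_i\in\F$, applying $\Phi$ term by term gives
\[
M_f \;=\; \Phi(f) \;=\; \sum_{i=1}^t c_i\,\Phi(w_i) \;=\; \sum_{i=1}^t c_i\, w_i(M_{x_0},M_{x_1}),
\]
which is exactly the claimed identity. The only thing worth double-checking is that scalar coefficients in $\F$ commute with the matrix entries, but this is immediate since $\F\subseteq R$ sits in the center of $\mathbb{M}_{k+1}(R)$.

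There is no genuine obstacle here; the proposition is essentially a bookkeeping statement whose content is captured entirely by the linearity of substitution. The real work comes later, when one must argue that the expansion of each $w_i(M_{x_0},M_{x_1})$ produces commutative monomials in the $\xi$'s and $y$'s whose supports are pairwise disjoint across distinct $i$, so that no cancellations can occur and $M_f\ne 0$ whenever $f\ne 0$. That disjointness is where the NFA structure and the isolating index set from Lemma~\ref{isolating-lemma} actually play a role, not in the present proposition.
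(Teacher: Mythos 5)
Your proof is correct and matches the paper's intent exactly: the paper simply declares the proposition ``immediate as $f$ is a linear combination of the $w_i$'s,'' and your argument via the universal property of $\F\langle x_0,x_1\rangle$ and $\F$-linearity of the resulting evaluation homomorphism is precisely the formal content behind that one-line remark.
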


Now we are ready to prove Theorem \ref{thm-main}. 

\begin{proof}[Proof of Theorem \ref{thm-main}]

We assume that $n$-variate polynomial $f$ is converted to the
bivariate polynomial $f(x_0,x_1)$ over $x_0$ and $x_1$. Let $\M$
denote the set of nonzero monomials of degree $D$ occurring in $f$,
where $D$ is the degree of $f$. Then we can write the polynomial
$f=\sum_{j=1}^t c_j w_j$ in two parts
\[
f= \sum_{w_j\in\M} c_j w_j + \sum_{w_j\not\in\M} c_j w_j,
\]

where $\sum_{w_j\in\M} c_j w_j$ is the homogeneous degree $D$ part of
$f$.

Let us assume, without loss of generality, that $w_1$ is in $\M$ and
it is the monomial isolated in Lemma \ref{isolating-lemma}, and the
isolating index set be $I=\{i_1,i_2,\cdots,i_{k}\}$ such that for
all $w_j\in\M$, $w_j|_I \neq w_1|_I$ (i.e.\ the projections of each
$w_j, j\ne 1$ on index set $I$ differs from the projection of
$w_1$). Let

\[
w_1=x_{b_1}x_{b_2}\cdots x_{b_D},
\] 
where $b_j \in \{0,1\}$.

The following claim is immediate.

\begin{claim}
For each index set $J=\{j_1,j_2,\cdots,j_{k}\}$
nondeterministically picked by the substitution NFA, each nonzero
degree $D$ monomial $w_j$ occurring in $f$ is transformed into a
unique degree $D$ monomial $w_{j,J}$ (which is over the block and
index variables). More precisely, let
$\xi_J=\xi_1^{j_1-1}\xi_2^{j_2-j_1}\cdots\xi^{D-j_{k}}_{k +
  1}$ and $y_{j,J}=y_{a_1,1}y_{a_2,2}\cdots y_{a_{k},k}$. Then
\[
w_{j,J}= \xi_J y_{j,J}.
\]
\end{claim}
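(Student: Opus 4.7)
The claim is essentially a bookkeeping statement about a single accepting run of the substitution NFA $A$ on the input monomial $w_j = x_{b_1}x_{b_2}\cdots x_{b_D}$. My plan is to interpret the nondeterministic choice $J = \{j_1 < j_2 < \cdots < j_k\} \subseteq [D]$ as the specification of a unique accepting run: at step $j_i$ the automaton takes the forward transition $q_{i-1}\to q_i$, and at every other step $\ell \in [D]\setminus J$ it takes the self-loop at its current state. Because each state $q_{i-1}$ has only two outgoing transitions on any input symbol (a self-loop labeled $\xi_i$ and a forward edge labeled $y_{b,i}$ for $b\in\{0,1\}$), this data really does determine exactly one run. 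Moreover, reaching $q_f = q_k$ from $q_0$ in $D$ steps requires precisely $k$ forward transitions fired in increasing order of position, so every accepting run corresponds to exactly one such ordered set $J$.

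Once this path is fixed I would read off the label produced at each step. If $\ell = j_i$, then by the rule $\delta(q_{i-1}, x_b) \xrightarrow{y_{b,i}} q_i$ the step contributes the index variable $y_{b_{j_i}, i}$. If $\ell \notin J$ and $j_{i-1} < \ell < j_i$ (using the conventions $j_0 = 0$ and $j_{k+1} = D+1$), then by the rule $\delta(q_{i-1}, x_b) \xrightarrow{\xi_i} q_{i-1}$ the step contributes the block variable $\xi_i$. Taking the product of these labels along the path, and using the fact that the entries of $M_{x_0}$ and $M_{x_1}$ are distinct commuting indeterminates so that factors may be freely reordered, I collect all the block variables first and all the index variables second. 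The block variables assemble into $\xi_J$, with each $\xi_i$ appearing to the power equal to the number of loop steps at state $q_{i-1}$; the index variables, because the second subscript already records the block, assemble in block-order into $y_{j,J} = y_{b_{j_1},1}\, y_{b_{j_2},2} \cdots y_{b_{j_k},k}$. This gives $w_{j,J} = \xi_J \cdot y_{j,J}$ as stated.

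Uniqueness of the resulting monomial is then immediate: the $\xi$-variables and the $y$-variables are disjoint families of indeterminates, and the second subscript on every index variable records which block it comes from, so the factorization into a block part and an index part is unambiguous and the exponents are completely determined by $J$ (for $\xi_J$) and by $J$ together with the bits $b_{j_1},\ldots,b_{j_k}$ of $w_j$ (for $y_{j,J}$). I do not expect any real obstacle here; the only mildly delicate point is correctly handling the boundary blocks (the prefix before $j_1$ and the suffix after $j_k$) and being explicit about the fact that each choice of $J$ picks out exactly one accepting run rather than several — both of which follow from the very rigid structure of $A$ as a directed path graph with self-loops.
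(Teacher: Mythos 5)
Your proof is correct and is precisely the argument the paper has in mind; the paper simply labels this claim ``immediate'' and omits the verification, so you are just spelling out what the authors considered obvious. Your identification of $J$ with the positions of the $k$ forward transitions, the reading-off of edge labels, and the use of commutativity to regroup block and index variables are all exactly what is needed, and the uniqueness argument via the disjointness of the $\xi$- and $y$-families is the right one.

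One small point worth flagging, though it is a typo in the paper's statement rather than a flaw in your proof: the exponents in $\xi_J$ as written in the claim do not quite add up. Your own description says each $\xi_i$ appears with exponent equal to the number of self-loops taken at state $q_{i-1}$, which for the interior blocks is $j_i - j_{i-1} - 1$ (not $j_i - j_{i-1}$ as the displayed formula suggests). Indeed, the exponents must sum to $D-k$ since exactly $k$ of the $D$ steps are forward transitions producing $y$-variables, which forces $\xi_J = \xi_1^{\,j_1-1}\,\xi_2^{\,j_2-j_1-1}\cdots\xi_k^{\,j_k-j_{k-1}-1}\,\xi_{k+1}^{\,D-j_k}$. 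This correction changes nothing in the downstream argument --- all that matters there is that distinct index sets $J\ne J'$ give distinct $\xi_J\ne\xi_{J'}$, which remains true.
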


Notice that for two distinct index sets $J$ and $J'$ we clearly have
$\xi_J\ne \xi_{J'}$. We also note that $y_{j,J}$ is essentially the
projection of the degree $D$ monomial $w_j$ to the index set $J$; if
variables $x_b$ occurs in the $j_k^{th}$ position of $w_j$ then it is
replaced by $y_{b,k}$ in $y_{j,J}$.

Furthermore, for each $w_j\in\M$ we note that the $(q_o,q_f)^{th}$
entry of the matrix $w_j(M_{x_0},M_{x_1})$ is the sum $\sum_J
w_{j,J}=\sum_J \xi_J y_{j,J}$.  For different index sets $J$ the
monomials $\xi_J y_{j,J}$ are all distinct.

Let $f_J$ be the polynomial
\[
f_J = \sum_{j=1}^t c_j w_{j,J} = \sum_{w_j\in\M} c_j w_{j,J} +
\sum_{w_j\notin\M} c_j w_{j,J}.
\]

\begin{claim}
After the matrix substitution $x_0=M_{x_0}$ and $x_1=M_{x_1}$ in the
polynomial $f$ we note that the $(q_0,q_f)^{th}$ entry of the matrix
$f(M_{x_0},M_{x_1})$ is $\sum_J f_J$.
\end{claim}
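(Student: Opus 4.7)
The plan is to compute the $(q_0,q_f)$-entry directly by combining Proposition~\ref{prop} with a standard path-counting interpretation of matrix multiplication, applied to the very explicit bidiagonal shape of $M_{x_0}$ and $M_{x_1}$.

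First I would invoke Proposition~\ref{prop} to pull the linear combination outside the entry operation:
\[
[f(M_{x_0},M_{x_1})]_{q_0,q_f} \;=\; \sum_{j=1}^t c_j\,[w_j(M_{x_0},M_{x_1})]_{q_0,q_f},
\]
reducing the statement to a single monomial $w_j = x_{b_1}x_{b_2}\cdots x_{b_\ell}$.

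Next, since each $M_{x_b}$ is upper bidiagonal---with diagonal entries the loop labels $\xi_1,\ldots,\xi_{k+1}$ and superdiagonal entries the forward labels $y_{b,1},\ldots,y_{b,k}$---expanding the product by matrix multiplication yields
\[
[w_j(M_{x_0},M_{x_1})]_{q_0,q_f} \;=\; \sum_{q_0=r_0,r_1,\ldots,r_\ell=q_f}\ \prod_{s=1}^{\ell} (M_{x_{b_s}})_{r_{s-1},r_s}.
\]
The nonzero terms are in bijection with accepting runs of the substitution NFA on $w_j$, i.e.\ with choices of a set $J=\{j_1<\cdots<j_k\}\subseteq[\ell]$ of positions at which the forward (index) transitions are taken, with loops at every other position. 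By construction of $M_{x_0}, M_{x_1}$ the product of labels along such a run is precisely $w_{j,J}=\xi_J\,y_{j,J}$ from the previous claim, so
\[
[w_j(M_{x_0},M_{x_1})]_{q_0,q_f} \;=\; \sum_{J} w_{j,J},
\]
with the convention that $w_{j,J}=0$ whenever $J$ is not a valid index set for $w_j$ (in particular when $\ell<k$ or when some $j_r>\ell$).

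Finally I would swap the order of summation:
\[
\sum_{j=1}^t c_j \sum_{J} w_{j,J} \;=\; \sum_{J} \sum_{j=1}^t c_j\,w_{j,J} \;=\; \sum_{J} f_J,
\]
which is the desired equality. The only subtlety I anticipate is bookkeeping for lower-degree monomials $w_j\notin\M$, where $J$ must lie inside $[|w_j|]$ rather than $[D]$; this is absorbed uniformly by the zero convention above, and the disjointness observation from the previous claim (different $(j,J)$ pairs produce distinct block/index variable monomials) guarantees that no cancellation invalidates the path count.
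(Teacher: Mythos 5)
Your proposal is correct and amounts to a fully spelled-out version of the paper's one-line argument: the paper simply asserts that "$f_J$ is the contribution of the nondeterministic path corresponding to index set $J$," and your path-counting expansion of the matrix product, together with the zero convention for lower-degree monomials and the final swap of summation order, is exactly the detailed form of that claim.
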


The above claim clearly holds because the polynomial $f_J$ is the
contribution of the nondeterministic path corresponding to index set
$J$.

\begin{claim}
For any two index sets $J, J'$ and any monomial $w_j\in\M$, the
corresponding commutative monomials $w_{j,J}$ and $w_{j,J'}$ are
distinct.
\end{claim}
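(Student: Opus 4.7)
The plan is to leverage the fact, already observed just before the claim, that the block-variable factor $\xi_J$ uniquely determines the index set $J$. Write
\[
w_{j,J} = \xi_J\, y_{j,J}, \qquad \xi_J = \xi_1^{j_1-1}\xi_2^{j_2-j_1}\cdots\xi_{k+1}^{D-j_k},
\]
viewed as commutative monomials. Note that the block variables $\{\xi_1,\ldots,\xi_{k+1}\}$ and the index variables $\{y_{b,\ell} : b\in\{0,1\},\, 1\le \ell\le k\}$ are disjoint sets of indeterminates. Consequently, an equality of the commutative monomials $w_{j,J}=w_{j,J'}$ forces the equalities $\xi_J=\xi_{J'}$ and $y_{j,J}=y_{j,J'}$ separately, by comparing the multidegrees in the block and index variables.

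The main step is then to argue that $\xi_J\neq \xi_{J'}$ whenever $J\neq J'$. This reduces to showing that the exponent tuple
\[
(e_1,e_2,\ldots,e_{k+1}) = (j_1-1,\ j_2-j_1,\ \ldots,\ j_k-j_{k-1},\ D-j_k)
\]
uniquely determines $J=\{j_1<j_2<\cdots<j_k\}$. Indeed, given the exponents, one recovers $j_\ell = 1 + \sum_{i=1}^{\ell} e_i$ for each $\ell\in[k]$. So distinct index sets produce distinct exponent tuples of the distinct commuting variables $\xi_1,\ldots,\xi_{k+1}$, and hence distinct commutative monomials $\xi_J$.

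Combining the two steps, distinct $J,J'$ give $\xi_J\neq \xi_{J'}$, hence $w_{j,J}\neq w_{j,J'}$ as commutative monomials, regardless of the behavior of the $y$-part. There is essentially no obstacle here: the construction of the substitution NFA was designed precisely so that the block exponents encode the gaps between consecutive indices in $J$, which is an injective encoding. This claim is the technical backbone that justifies, in the next step of the proof, that the nondeterministic contributions $f_J$ corresponding to different guessed index sets cannot cancel with each other in the $(q_0,q_f)$ entry.
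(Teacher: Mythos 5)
Your proof is correct and follows essentially the same route as the paper: reduce the claim to the observation that $\xi_J \neq \xi_{J'}$ for distinct index sets, which the paper states just before the claim and which you justify explicitly via the injectivity of the map from $J$ to the exponent tuple of $\xi_J$. Your argument simply spells out the details (disjointness of block and index variables, recovery of $j_\ell$ from partial sums of exponents) that the paper leaves implicit.
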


To see this claim it suffices to see that $w_{j,J}=\xi_J y_{j,J}$ and
$w_{j,J'}=\xi_{J'}y_{j,J'}$ and we have already observed that
$\xi_J\ne \xi_{J'}$.

Finally, we focus on the monomial $w_{1,I}$ occurring in the
polynomial $\sum_J f_J$, where $w_1$ is the isolated monomial and $I$
is the isolated index set.

\begin{claim}
The coefficient of $w_{1,I}$ in the polynomial $\sum_J f_J$ is $c_1$.
As a consequence, the polynomial $\sum_J f_J$ which occurs in the
$(q_0, q_f)$ entry of the matrix $M_f = f(M_{x_0}, M_{x_1})$ is
nonzero because the coefficient of $w_{1,I}$ in it is nonzero.
\end{claim}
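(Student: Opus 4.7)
The plan is to identify precisely which index pairs $(j,J)$ contribute a copy of the commutative monomial $w_{1,I}$ to the expansion $\sum_J f_J = \sum_{j=1}^{t} c_j \sum_J w_{j,J}$, and to show there is exactly one, namely $(1,I)$. The whole argument rests on the clean factorization $w_{j,J} = \xi_J\, y_{j,J}$ together with the observation that the block variables $\xi_1,\dots,\xi_{k+1}$ and the index variables $\{y_{b,\ell}\}_{b\in\{0,1\},\ell\in[k]}$ form disjoint families of distinct commuting indeterminates.

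First I would pin down $J$ using only the $\xi$-part. The exponent sequence of $\xi_1,\dots,\xi_{k+1}$ in $\xi_J$ records the gaps between consecutive chosen indices of $J$ inside a word of length $\deg(w_j)$, and these gaps (together with the total $\xi$-degree, which equals $\deg(w_j)-k$) determine both $J$ and $\deg(w_j)$. Hence any equality $\xi_J\, y_{j,J} = \xi_I\, y_{1,I}$ forces $\xi_J = \xi_I$, which in turn forces $J=I$ and $\deg(w_j)=D$. This already eliminates every $w_j\notin\M$: such monomials have total degree strictly less than $D$, so $w_{j,J}$ cannot match $w_{1,I}$ in any case.

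Next, with $J=I$ fixed, I would observe that the assignment $w_j \mapsto y_{j,I}$ is exactly the letter-by-letter substitution $x_b \mapsto y_{b,\ell}$ applied to the projection $w_j|_I$, where $\ell$ indexes the position within $I$. Since the $y_{b,\ell}$ are pairwise distinct variables, this substitution is injective on words of length $k$, so $y_{j,I} = y_{1,I}$ is equivalent to $w_j|_I = w_1|_I$. By Lemma~\ref{isolating-lemma}, $I$ is an isolating index set for $\M$ with witness $w_1$, so no $w_j\in\M$ other than $w_1$ agrees with $w_1$ on $I$. Consequently, the only contributing pair is $(j,J)=(1,I)$, giving coefficient exactly $c_1$.

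Since $w_1$ is a nonzero monomial of $f$, $c_1\neq 0$, so $\sum_J f_J$ is a nonzero polynomial in the commuting variables $\{\xi_\ell\}\cup\{y_{b,\ell}\}$; this polynomial is precisely the $(q_0,q_f)$-entry of $M_f = f(M_{x_0},M_{x_1})$, hence $M_f$ is a nonzero matrix over the polynomial ring. Specializing the commuting variables to elements of a sufficiently large extension of $\F$ (via Schwartz--Zippel) then produces explicit matrices in $\mathbb{M}_{k+1}(\F)\subseteq\mathbb{M}_{\log t+1}(\F)$ on which $f$ does not vanish, completing Theorem~\ref{thm-main}. The only delicate point is the injectivity of the $\xi$-encoding $J\mapsto\xi_J$, which is really just bookkeeping on the gap sequence; there is no substantive obstacle.
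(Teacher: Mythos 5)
Your proof is correct and takes essentially the same approach as the paper: both rely on the factorization $w_{j,J}=\xi_J\,y_{j,J}$, use that $\xi_J$ determines $J$ (and, in your cleaner packaging, also $\deg(w_j)$, which handles the $w_j\notin\M$ case), and then invoke the isolating property of $I$ to rule out $y_{j,I}=y_{1,I}$ for $j\ne 1$. The paper separates the sub-$D$-degree and same-degree cases explicitly rather than folding the degree constraint into the $\xi$-encoding, but the underlying argument is identical.
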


To see the above claim we note the following points:
\begin{enumerate}
\item For the monomials $w_j\notin \M$ notice that for each index set
$J$ of size $k$, the contribution to the $(q_0,q_f)^{th}$ entry of
the matrix $f(M_{x_0},M_{x_1})$ is a monomial of degree $\deg(w_j)$,
and $\deg(w_j)<D$. Hence, these monomials have no influence on the
coefficient of $w_{1,I}$ in the polynomial $\sum_J f_J$.

\item Consider monomials $w_j\in\M$ for $j\ne 1$. Notice that

\[
w_{1,I}= \xi_I y_{1,I},
\]

and for $j\ne 1$

\[
w_{j,I}= \xi_I y_{j,I},
\]

and the monomials $y_{1,I}$ and $y_{j,I}$ are different because $I$ is
an isolating index set and the monomial $w_1$ is isolated. I.e.\ the
monomials $w_{1,I}$ and $w_{j,I}$ will necessarily differ in the index
variables occurring in them as a consequence of the isolation
property.

Therefore, the monomials $w_j\in\M$ for $j\ne 1$ also have no influence
on the coefficient of $w_{1,I}$ in the polynomial $\sum_J f_J$.
\end{enumerate}

Hence, we conclude that the $(q_0, q_f)^{th}$ entry of the matrix $M_f =
f(M_{x_0}, M_{x_1})$ is a nonzero polynomial $\sum_J f_J$ in the
commuting variables $\bigcup_{j\in [k + 1]} \{\xi_j\}$
and $\bigcup_{j\in [k]}\{y_{0,j}, y_{1,j}\}$. Moreover the degree
of polynomial $\sum_J f_J$ is $D$. Now we can apply
Schwartz-Zippel-Lipton-DeMello Lemma \cite{Sch80,Zippel79,DL78} to
conclude that the polynomial $\sum_J f_J$ will be nonzero over a
suitable extension of size more than $(n+2)d$ of the field $\F$. 
Since the polynomial $f$ is nonzero over the algebra $\mathbb{M}_{k+1}(\F)$, it is also nonzero over the algebra $\mathbb{M}_{\log t+1}(\F)$.
This completes the proof of Theorem~\ref{thm-main}.
\end{proof}

\section*{Acknowledgement}

The number of monomials in a noncommutative polynomial which has an
arithmetic circuit of size $s$ can actually be doubly exponential in
$s$. We thank Srikanth Srinivasan for pointing this out to us.

% Bibliography
%\bibliographystyle{amsalpha} 

%\bibliography{references}

\end{document}